\documentclass[aps,pre,preprint]{revtex4-1}
\usepackage{enumerate}
\usepackage[utf8]{inputenc}
\usepackage{amsthm}
\usepackage{amssymb}
\usepackage{amsmath}

\newtheoremstyle{definition}
{\topsep}
{\topsep}
{\em}
{}
{\bf}
{:}
{.5em}
{\thmname{#1}\thmnumber{ #2}\thmnote{ (#3)}}
\theoremstyle{definition}
\newtheorem{definition}{Definition}

\newtheoremstyle{proposition}
{\topsep}
{\topsep}
{\em}
{}
{\bf}
{:}
{.5em}
{\thmname{#1}\thmnumber{ #2}\thmnote{ (#3)}}
\theoremstyle{proposition}
\newtheorem{proposition}{Proposition}

\newtheoremstyle{lemma}
{\topsep}
{\topsep}
{\em}
{}
{\bf}
{:}
{.5em}
{\thmname{#1}\thmnumber{ #2}\thmnote{ (#3)}}
\theoremstyle{lemma}
\newtheorem{lemma}{Lemma}

\newtheoremstyle{corollary}
{\topsep}
{\topsep}
{\em}
{}
{\bf}
{:}
{.5em}
{\thmname{#1}\thmnumber{ #2}\thmnote{ (#3)}}
\theoremstyle{corollary}
\newtheorem{corollary}{Corollary}

\begin{document}
	
	
	\title{Wealth concentration in systems with unbiased binary exchanges}

\author{Ben-Hur Francisco Cardoso}
\email{ben-hur.cardoso@ufrgs.br}
\affiliation{Instituto de Física, Universidade Federal do Rio Grande
	do Sul, Porto Alegre, RS, Brazil}
\author{Sebastián Gonçalves}
\email{sgonc@if.ufrgs.br}
\affiliation{Instituto de Física, Universidade Federal do Rio Grande
	do Sul, Porto Alegre, RS, Brazil}
\author{José Roberto Iglesias}
\email{iglesias@if.ufrgs.br}
\affiliation{Instituto de Física, Universidade Federal do Rio Grande
	do Sul, Porto Alegre, RS, Brazil}
\affiliation{Instituto Nacional de Ciência e Tecnologia de Sistemas
	Complexos, INCT-SC, CBPF, Rio de Janeiro, RJ, Brazil}

	\date{\today}
	
	\begin{abstract}
		\noindent
                Aiming to describe the wealth distribution evolution, several models consider 
an ensemble of interacting economic agents that exchange wealth in binary fashion. Intriguingly, 
models that consider an unbiased market, that gives to each agent the same chances to win in the 
game, 
are always out of equilibrium until the perfect inequality of the final state is attained.
Here we present a rigorous analytical demonstration that any system driven by unbiased binary 
exchanges are doomed to drive the system to perfect inequality and zero mobility.
	\end{abstract}
	
	\maketitle

\section{Introduction}
Statistical physics, in particular the kinetic theory of gases, provides a useful framework for describing the complexity of microscopic market interactions~\cite{slanina2013essentials}. Like a physical system composed of many particles that exchange their energy in binary conservative collisions, {\em the kinetic exchange models}~\cite{ispolatov1998wealth, boghosian2017oligarchy, patriarca2010basic, yakovenko2009colloquium} consider a set of interacting agents where, sequentially, two randomly chosen agents exchange a conserved quantity called ``wealth".
Following this approach, given a system with $N$ agents, each agent $i$ is characterized by the wealth $x_i \geq 0$. So in an exchange between agents $i$ and $j$, we have:
\begin{equation}
x_{i}^* = x_{i} + \Delta_{i} \>\>\> \text{and} \>\>\> x_{j}^* = x_{j} + \Delta_{j},
\end{equation}
where $x_{i(j)}^*$ is the wealth after the exchange and $\Delta_{i(j)}$ is the stochastic gain of agent $i(j)$. Since the kinetic exchange models work as zero-sum games, we have that $\Delta_{j} = -\Delta_{i}$.

The first model within this general framework is the so-called {\it loser} rule~\cite{hayes2002, angle1986surplus}, where the  wealth exchanged is 
\begin{equation}
\Delta_{i} = \epsilon \lambda x_{j} - (1 - \epsilon) \lambda x_{i},\>\>\epsilon\in\{0,1\},\>\>\mathbb{E}[\epsilon] = \frac{1}{2},
\end{equation}
and $0 \leq \lambda \leq 1$ can be a random or constant number. The model was criticized by {\em Lux}~\cite{lux2005emergent}, noting that within this mechanism $\mathbb{E}[\Delta_i] \propto (x_j - x_i)$; so it favors, on average, the poorer agents.  Why then would a rational agent participate in this type of transaction with a partner less wealthy than him/her? The same criticism could be applied for the variations of the {\it loser rule}~\cite{cardoso2020wealth,chatterjee2004pareto,chakraborti2000statistical}.
It is important to note that the criticism is not about the existence of a poor-biased wealth dynamics, but the assumption that such a mechanism emerges from free market exchanges. 
To stabilize the wealth distribution, some kind of poor-favoring mechanism is necessary, but, as we will see, in the form of a regulation policy.

In contrast, other models overcome the criticisms by proposing rules where the expected wealth gain is the same, regardless of whether the agent is rich or poor.
Mathematically, a zero-sum exchange process is unbiased~\cite{bouleau2017impact} if $\mathbb{E}[\Delta_{i}] = \mathbb{E}[\Delta_{j}] = 0$. The best-known unbiased model is the {\it Yard-Sale}~\cite{hayes2002} 
\begin{equation}
\Delta_{i} = \eta\lambda\min(x_{i},x_{j}),\>\>\eta\in\{-1,1\},\>\>\mathbb{E}[\eta] = 0,
\end{equation}
where $0\leq\lambda\leq 1$ can be a constant or a random number.
In addition, it was proposed recently a modified version of the {\it loser} rule~\cite{bouleau2017impact}, stating that
\begin{equation}
\Delta_{i} = \epsilon\lambda x_{j} -  (1 - \epsilon)\lambda x_{i},\>\>\epsilon\in\{0,1\},\>\>\mathbb{E}[\epsilon] = \frac{x_{i}}{x_{i} + x_{j}},
\end{equation}
where $0 \leq \lambda \leq 1$ can be a random or constant number. Finally, a similar model was proposed by Iglesias and Almeida~\cite{iglesias2012entropy}, where
\begin{equation}
\Delta_{i} = \eta\frac{x_{i}x_{j}}{x_{i} + x_{j}},\>\>\eta\in\{-1,1\},\>\>\mathbb{E}[\eta] = 0.
\end{equation}

Although all these rules are supposed to be unbiased, the system always converges to {\it condensation}~\cite{moukarzel2007wealth, iglesias2012entropy,cardoso2020wealth, boghosian2017oligarchy,bouleau2017impact}, where one or a few agents concentrate all the wealth and there are no more exchanges. To avoid condensation, it is necessary to include some bias toward the poorer in the models, such as taxation~\cite{lima2020nonlinear,iglesias2020inequality, boghosian2017oligarchy,bouleau2017impact, li2019affine}, redistribution~\cite{iglesias2020inequality}, or a factor that increases the probability of the poorest agent to win in an exchange~\cite{moukarzel2007wealth,cardoso2020wealth,iglesias2012entropy}.

The  rest of the paper is organized as follows.
In section~\ref{sec:master} we will define the economic metrics of inequality and mobility and describe the evolution of unbiased kinetic exchange models in the thermodynamic limit ($N\rightarrow \infty$), following a Boltzmann-like Master Equation~\cite{ispolatov1998wealth, yakovenko2009colloquium, boghosian2014kinetics}.
In section~\ref{sec:prop} we will proof that condensation is the asymptotic state of any unbiased kinetic exchange model, in the thermodynamic limit. In other words, we will demonstrate that wealth concentration and mobility reduction is a natural consequence of free exchanges in an unregulated market, with rules that are in principle unbiased or ``fair''. Section IV is devoted to the conclusions.

\section{Master Equation}\label{sec:master}

\begin{definition}
	Let be a system represented by the probability density function $f(x,t)$, where $f(x,t)dx$ is the fraction of agents with wealth within $dx$ of $x$ at time $t$. This probability density function must obey the following properties:
	\begin{enumerate}[(i)]
		\item Negative wealth is not allowed, then, for all $t$ we have that
		\begin{equation}
		x < 0 \Rightarrow f(x,t) = 0.
		\end{equation}
		\item For all $t$, the density distribution function must be normalized
		\begin{equation}
		\int_0^{\infty}dx\>f(x,t) = 1.
		\end{equation}
		\item The wealth is conserved. Thus, for all $t$, the first moment of the density distribution function must be constant
		\begin{equation}
		\int_0^{\infty}dx\>x\>f(x,t) = \langle x \rangle.
		\end{equation}
	\end{enumerate}
\end{definition}

\begin{definition}
	If the dynamics of $f(x,t)$ is driven by binary exchanges, then we can represent this dynamics with the following Master Equation~\cite{ispolatov1998wealth, boghosian2017oligarchy, bustos2016wealth, yakovenko2009colloquium,bassetti2010explicit} 
	\begin{eqnarray}\label{eq:master}
	\nonumber
	\frac{\partial f(x,t)}{\partial t} = \int_{0}^{\infty} dx' \int_{-x}^{x'} d\Delta \bigg\{ \omega_{[x+\Delta,x'-\Delta]\rightarrow[x,x']}f(x+\Delta,t) f(x'-\Delta,t) -\\
	-\omega_{[x,x']\rightarrow[x+\Delta,x'-\Delta]}f(x,t) f(x',t)\bigg\}, 
	\end{eqnarray}
	where  $\omega_{[x,x']\rightarrow[x+\Delta,x'-\Delta]}\>d\Delta\>f(x,t)\>dx \>f(x',t)\>dx'$ is the rate of transferring wealth within $d\Delta$ of $\Delta$ from an agent with wealth within $dx$ of $x$ to an agent with wealth within $dx'$ of $x'$~\cite{yakovenko2009colloquium,pitaevskii2012physical}.
	Considering that all possible values of $\Delta$ must be in the interval $-x \leq \Delta \leq x'$, to keep wealth non-negative, along with the fact that wealth is neither created nor destroyed, they impose the following normalization on transfer rates:
	\begin{equation}
	\int_{-x}^{x'} d\Delta \>\omega_{[x,x']\rightarrow[x+\Delta,x'-\Delta]} = 1.
	\label{eq:normal}
	\end{equation}
\end{definition}

\begin{definition}
Let be two random selected agents with wealth $x$ and $x'$. An exchange process is said unbiased if the expected gain is zero for both agents~\cite{bouleau2017impact,cardoso2020wealth}, that is,
\begin{equation}
\int_{-x}^{x'} d\Delta \>\Delta \>\omega_{[x,x']\rightarrow[x+\Delta,x'-\Delta]} = 0.
\label{eq:unbiased}
\end{equation}
\end{definition}

\subsection{Wealth Inequality}

\begin{definition}
	Let be a probability distribution function $f(x,t)$ described by the Definition 1. 	The Gini index is a measure of the inequality  that is defined as~\cite{sen1997economic}:
	\begin{equation}\label{eq:gini}
	G(t) = \frac{1}{2\langle x \rangle}\int_{0}^{\infty} dx \int_{0}^{\infty} dx_1\> | x-x_1 | \> f(x, t) f (x_1, t).
	\end{equation}
The Gini index varies between $0$, which corresponds to perfect equality (i.e. everyone has the same wealth), and $1$, that corresponds to maximum inequality (the absolute oligarchy~\cite{boghosian2017oligarchy}).
\end{definition}

It is important to note that its difficult to translate the notion of maximum inequality for a finite number of agents (i.e. one person has all the wealth, while everyone else has zero wealth) to a probability density function, since it has an ``infinite'' number of agents. That problem was studied by Boghosian et al.~\cite{boghosian2014kinetics, boghosian2015h, boghosian2017oligarchy}, with the following definition:

\begin{definition}
	The absolute oligarchy density distribution function is defined as
	\begin{equation}
	f_c (x) = \delta (x) + \lim_{N \rightarrow \infty} \frac{\delta (x - \langle x \rangle N)}{N},
	\label{eq:oli}
	\end{equation}
	that is, all wealth is in hands of an infinitesimal part of the system. Also, it is the unique probability density function, in terms of Definition 1, that satisfies the following properties ~\cite{boghosian2014kinetics, boghosian2015h, boghosian2017oligarchy}:
    \begin{enumerate}[(i)]
		\item $f_c(x) > 0 \iff x = 0$.
		\item $\int_{0}^{\infty}dx\>x\>f_c(x) = \langle x \rangle$.
		\item for all $k > 1$, $\int_{0}^{\infty}dx\>x^k\>f_c(x)$ diverges.
	\end{enumerate}

In fact, the property $(i)$ of the previous definition says that, with the absolute oligarchy probability density function, the probability of having agents with positive wealth is zero, that is,
\begin{equation}
\mathbb{P}(x > 0) = 0.
\end{equation}
\end{definition}

This does not mean, however, that there are no agents with $x> 0$. This apparent paradox stems from the nature of probability measures in infinite sets, as is the case with the thermodynamic limit ($ N \rightarrow \infty $). For the probability to be zero, it is sufficient that there be a finite~\cite{schilling2017measures} --not necessarily null-- number of agents with positive wealth.
In the example of absolute oligarchy, the number of agents that concentrates all wealth is $ 1 $, regardless of the total number of agents $ N $. At the thermodynamic limit, the total set of agents becomes infinite. Thus, since the set of agents with positive wealth is finite ($ 1 $ agent), the probability of having agents with positive wealth becomes zero.

\subsection{Mobility}
Keeping the constraint of wealth conservation, the flux of wealth from one agent to another, that we call ``mobility'',  is only possible through exchanges. Due to this fact, we define:

\begin{definition}
	The mobility of agents with wealth $x$ at time $t$ is defined by	\begin{equation}\label{eq:mob}
	l(x,t) = \int_0^{\infty}dx'\int_{-x}^{x'} d\Delta \>|\Delta|\> \omega_{[x,x']\rightarrow[x+\Delta,x'-\Delta]}\> f(x',t).
	\end{equation}
\end{definition}

\begin{definition}
	The wealth $x$ is said to be in an absorbing state if $l(x,t) = 0$ whenever $f(x,t) > 0$,  i.e, once the agent is in this state, she/he can no longer get out of this situation.
\end{definition}

\begin{definition}
	
	The system's mobility is the fraction of average wealth exchanged per unit time, called the {\it Liquidity} of the system~\cite{iglesias2012entropy,cardoso2020wealth}, defined as
	\begin{equation}\label{eq:liquidez}
	L(t) = \frac{1}{2\langle x \rangle}\int_{0}^{\infty} dx\>l(x,t) f(x,t),
	\end{equation}
	where the factor $1/2$ cancel the double counting. Liquidity varies between $0$, which corresponds to an economy  without exchanges, and $1$, that corresponds to full exchange economy (i.e. all the wealth are exchanged per unit time).
\end{definition}

\section{Propositions}\label{sec:prop}

\begin{proposition}
A system of unbiased binary exchanges has $x = 0$ as an absorbing state.
\end{proposition}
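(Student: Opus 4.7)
The plan is to evaluate the mobility $l(x,t)$ at $x=0$ and show, by combining the unbiased condition with the non-negativity of the transfer rates, that it must vanish identically. The key geometric observation is that setting $x=0$ collapses the lower limit of the $\Delta$-integral from $-x$ to $0$, so the entire range of admissible transfers becomes $\Delta \in [0,x']$, i.e.\ only non-negative values of $\Delta$ remain allowed (an agent with zero wealth cannot give away anything, they can only receive).

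First I would substitute $x=0$ into the definition of mobility, obtaining
\begin{equation}
l(0,t) = \int_0^{\infty}\!dx'\int_0^{x'}\!d\Delta\;|\Delta|\;\omega_{[0,x']\to[\Delta,x'-\Delta]}\;f(x',t).
\end{equation}
Since $\Delta \geq 0$ throughout the inner integral, I can drop the absolute value and write $|\Delta|=\Delta$. Next I would invoke the unbiased condition from Definition 3 with $x=0$, which reads
\begin{equation}
\int_0^{x'}\!d\Delta\;\Delta\;\omega_{[0,x']\to[\Delta,x'-\Delta]} = 0
\end{equation}
for every $x' \geq 0$.

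At this point the crucial step is to note that the integrand $\Delta\,\omega_{[0,x']\to[\Delta,x'-\Delta]}$ is non-negative on $[0,x']$: the factor $\Delta$ is non-negative by the collapsed integration range, and $\omega$ is non-negative as a rate (transition density). A non-negative function whose integral vanishes must equal zero almost everywhere, so $\Delta\,\omega=0$ a.e., which means $\omega_{[0,x']\to[\Delta,x'-\Delta]}=0$ for almost every $\Delta \in (0,x']$. Consequently the inner integral appearing in $l(0,t)$, whose integrand is precisely $\Delta\,\omega\,f(x',t)$, vanishes for every $x'$, and therefore $l(0,t)=0$ regardless of the value of $f(0,t)$. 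This satisfies the condition of Definition 7 and establishes that $x=0$ is absorbing.

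I do not expect a serious obstacle here; the argument is essentially a one-line observation that an unbiased zero-sum transfer from an agent who has nothing cannot transfer anything, formalized through the fact that the first moment of a non-negative measure controls its total mass on the support where the weight $\Delta$ is strictly positive. The only subtlety worth mentioning is the measure-zero set $\{\Delta=0\}$, which contributes nothing to $l(0,t)$ anyway since the integrand carries an explicit factor of $|\Delta|$.
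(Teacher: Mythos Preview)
Your proof is correct and follows essentially the same route as the paper: collapse the $\Delta$-range to $[0,x']$, apply the unbiased condition at $x=0$, use non-negativity of the integrand, and conclude $l(0,t)=0$. The only cosmetic difference is that the paper inserts an intermediate step identifying $\omega_{[0,x']\to[\Delta,x'-\Delta]}=\delta(\Delta)$ via the normalization condition, whereas you bypass this; in fact, once you note $|\Delta|=\Delta$ on $[0,x']$, the inner integral defining $l(0,t)$ \emph{is} the unbiased-condition integral, so even your almost-everywhere argument is not strictly required.
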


The proposition 1 has already been observed, considering finite size systems, in all the unbiased models reviewed in the Introduction~\cite{moukarzel2007wealth, iglesias2012entropy, cardoso2020wealth, boghosian2017oligarchy, bouleau2017impact}.

\begin{proposition}
In a system of unbiased binary exchanges, the Gini index is monotonically increasing:
	\begin{equation}
	\frac{dG(t)}{dt} \geq 0.
	\end{equation}
\end{proposition}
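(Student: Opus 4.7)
The plan is to differentiate the Gini index, use the master equation to reduce $dG/dt$ to a single integral via a ``test-function'' identity, and then exploit the convexity of $|\cdot|$ together with the unbiased condition~(\ref{eq:unbiased}). For brevity I write $\omega_{x,x'}(\Delta) := \omega_{[x,x']\to[x+\Delta,\,x'-\Delta]}$.

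First, since the kernel $|x-x_1|$ in Eq.~(\ref{eq:gini}) is symmetric under $x\leftrightarrow x_1$, differentiating $G$ and merging the two resulting terms gives
\begin{equation*}
\langle x\rangle\,\frac{dG(t)}{dt} \;=\; \int_0^\infty\!\!dy\int_0^\infty\!\!dx\;|x-y|\,\frac{\partial f(x,t)}{\partial t}\,f(y,t).
\end{equation*}
A preliminary lemma I would extract from the master equation~(\ref{eq:master}) is the test-function identity
\begin{equation*}
\int_0^\infty\!\!\phi(x)\,\frac{\partial f(x,t)}{\partial t}\,dx \;=\; \int_0^\infty\!\!dx\int_0^\infty\!\!dx'\int_{-x}^{x'}\!\!d\Delta\;\omega_{x,x'}(\Delta)\,f(x)\,f(x')\,\bigl[\phi(x+\Delta)-\phi(x)\bigr],
\end{equation*}
valid for every test function $\phi$. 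It follows by applying the change of variables $(x,x',\Delta)\mapsto(x+\Delta,\,x'-\Delta,\,-\Delta)$ in the gain term of Eq.~(\ref{eq:master}): the Jacobian is unity and it maps the integration region $\{x\ge 0,\,x'\ge 0,\,-x\le\Delta\le x'\}$ onto itself, so the reverse rate $\omega_{[x+\Delta,\,x'-\Delta]\to[x,x']}$ relabels exactly to $\omega_{x,x'}(\Delta)$.

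Inserting $\phi(x)=|x-y|$ into this identity and integrating the result against $f(y)\,dy$ yields
\begin{equation*}
\langle x\rangle\,\frac{dG}{dt} \;=\; \int_0^\infty\!\!dy\int_0^\infty\!\!dx\int_0^\infty\!\!dx'\int_{-x}^{x'}\!\!d\Delta\;f(y)\,f(x)\,f(x')\,\omega_{x,x'}(\Delta)\,\bigl[|x+\Delta-y|-|x-y|\bigr].
\end{equation*}
The last ingredient is the convexity of $z\mapsto |z|$, which provides the subgradient inequality $|z+\Delta|-|z|\ge\operatorname{sgn}(z)\,\Delta$ for all $z,\Delta\in\mathbb{R}$ (with the convention $\operatorname{sgn}(0)=0$). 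Applying it with $z=x-y$, the innermost $\Delta$-integral is bounded below by $\operatorname{sgn}(x-y)\int d\Delta\,\omega_{x,x'}(\Delta)\,\Delta$, which vanishes by the unbiased condition~(\ref{eq:unbiased}). Since $f,\omega\ge 0$, the remaining integrations preserve the sign and $dG/dt\ge 0$.

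The main obstacle is the test-function identity: one must carefully check the change of variables in the gain term of Eq.~(\ref{eq:master}) --- both the unit Jacobian and the invariance of the integration region --- so that gain and loss combine into a single integral against the kernel $\omega_{x,x'}(\Delta)\,f(x)\,f(x')$. Once that bookkeeping is in place, the convexity-plus-unbiasedness step closes the argument in two lines.
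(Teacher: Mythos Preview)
Your proof is correct and follows essentially the same route as the paper: both differentiate $G$, exploit the $x\leftrightarrow x_1$ symmetry, apply the change of variables $(x,x',\Delta)\mapsto(x+\Delta,x'-\Delta,-\Delta)$ in the gain term of the master equation to collapse gain and loss into a single integral against $\omega_{x,x'}(\Delta)f(x)f(x')[\,|x+\Delta-y|-|x-y|\,]$, and then use the unbiased condition to show the bracket has nonnegative $\Delta$-average. The only cosmetic difference is in that last step: the paper sandwiches $(x+\Delta-x_1)$ between $\pm|x+\Delta-x_1|$, integrates, and uses normalization plus unbiasedness to obtain $|x-x_1|\le\int\omega\,|x+\Delta-x_1|\,d\Delta$, whereas you invoke the subgradient inequality $|z+\Delta|-|z|\ge\operatorname{sgn}(z)\,\Delta$ directly---these are equivalent formulations of the same convexity argument.
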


The proposition 2 has already been observed, considering finite size systems, in all the unbiased models reviewed in the Introduction~\cite{moukarzel2007wealth, iglesias2012entropy, cardoso2020wealth, boghosian2017oligarchy, bouleau2017impact}. In the thermodynamic limit, with a similar methodology introduced in this section, the particular case of the {\it Yard Sale} rule was proved by {\it Boghosian et al.}~\cite{boghosian2015h}.

Also, as a consequence of proposition 2, the Gini index can be considered a Lyapunov functional of the systems driven by binary unbiased wealth exchanges~\cite{boghosian2015h}.

\begin{proposition}
In a system of binary unbiased exchanges, if $x = 0$ is the unique absorbing state of the system, then
	\begin{enumerate}[(i)]
		\item The stationary probability density function is the absolute oligarchy one (Eq.~\ref{eq:oli})
		\begin{equation}
		\lim_{t\rightarrow\infty}f(x,t) = f_c (x)
		\end{equation}
		\item The stationary inequality, so, is the highest one
		\begin{equation}
		\lim_{t\rightarrow\infty}G(t) = 1
		\end{equation}
		\item The stationary liquidity is the lowest one
		\begin{equation}
		\lim_{t\rightarrow\infty}L(t) = 0
		\end{equation}
	\end{enumerate}
\end{proposition}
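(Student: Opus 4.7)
The plan is to combine Proposition 2 with the assumed uniqueness of the absorbing state to pin down the limiting distribution. Since Proposition 2 gives $dG/dt\geq 0$ and Eq.~(\ref{eq:gini}) implies $G(t)\in[0,1]$, the limit $G^{*}=\lim_{t\to\infty}G(t)$ exists, so $\int_{0}^{\infty}(dG/dt)\,dt<\infty$. This forces any (weak) limit point $f^{*}$ of $f(\cdot,t)$ to be stationary in the sense that the functional $dG/dt$ evaluated at $f^{*}$ vanishes. The first task is to make this convergence statement precise using the normalization and wealth-conservation constraints from Definition~1, which together with monotonicity of $G$ give enough compactness on the space of admissible distributions.

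Next, I would substitute the master equation~(\ref{eq:master}) into~(\ref{eq:gini}) to obtain an explicit expression for $dG/dt$ as an integral over $(x,x',\Delta)$ weighted by $\omega\,f\otimes f$. Using the unbiased condition~(\ref{eq:unbiased}) together with the normalization~(\ref{eq:normal})—essentially the same manipulation that underlies the proof of Proposition~2—this expression should reorganize into an integrand that is pointwise non-negative and proportional to $|\Delta|$-type factors times $f(x,t)f(x',t)\,\omega_{[x,x']\to[x+\Delta,x'-\Delta]}$. Imposing $dG/dt[f^{*}]=0$ then forces this integrand to vanish almost everywhere, which in turn says that for a.e.\ pair $(x,x')$ in the support of $f^{*}\otimes f^{*}$ no exchange occurs, i.e.\ $\omega$ is concentrated on $\Delta=0$. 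Via Definition~8, this is exactly the statement $l(x,t)=0$ whenever $f^{*}(x)>0$, so every point in the support of $f^{*}$ is an absorbing state.

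Invoking the hypothesis that $x=0$ is the \emph{unique} absorbing state, we conclude that $f^{*}$ assigns zero probability to $x>0$, i.e.\ $\mathbb{P}(x>0)=0$. Property~(i) of Definition~6 is therefore satisfied; property~(ii) holds because $\langle x\rangle$ is conserved by Definition~1(iii); and property~(iii) is forced because the only way to reconcile $\mathbb{P}(x>0)=0$ with $\langle x\rangle>0$ is for all higher moments to diverge (otherwise a dominated-convergence argument on a sequence approximating $f^{*}$ would contradict $\langle x\rangle>0$). By the uniqueness clause in Definition~6, this identifies $f^{*}=f_{c}$, proving~(i). Claim~(ii) follows by plugging $f_{c}$ into~(\ref{eq:gini}) and noting that all mass at $x=0$ paired with the infinitesimal mass at infinity yields $|x-x_{1}|\to \langle x\rangle N$ and $G\to 1$. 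Claim~(iii) follows from~(\ref{eq:liquidez}) since $l(0,t)=0$ (absorbing) and the only other support point carries vanishing probability, so the integrand vanishes.

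The main obstacle, in my view, is step two: deriving the explicit non-negative representation of $dG/dt$ and then rigorously converting its vanishing into a pointwise statement about the support of $f^{*}$. The standard $H$-theorem manipulations for the Yard-Sale rule carried out by Boghosian et al.~\cite{boghosian2015h} rely on specific algebraic structure, and generalizing them to arbitrary unbiased $\omega$ requires care with the symmetrization $(x,x')\leftrightarrow(x',x)$ and with the boundary terms coming from the $-x\leq\Delta\leq x'$ constraint. A secondary subtlety is measure-theoretic: the weak limit $f^{*}$ may place mass at infinity (this is precisely what $f_{c}$ does), so one must justify passing the vanishing-mobility condition through the limit without losing the wealth stored in the escaping tail.
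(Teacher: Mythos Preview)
Your plan follows the paper's route closely: monotonicity and boundedness of $G$ force convergence; the non-negative integrand for $dG/dt$ obtained in the proof of Proposition~2 (Eq.~\ref{eq:gini_ev}, a three-variable integral over $(x,x',x_1)$ with bracket $\int \omega\,|x+\Delta-x_1|\,d\Delta - |x-x_1|$) must vanish on the support of $f_s(x)f_s(x')f_s(x_1)$; and this, together with the absorbing-state hypothesis, pins the support of $f_s$ at $x=0$. The step you flag as the main obstacle is handled in the paper not by arguing that $\omega$ must concentrate at $\Delta=0$ for general pairs, but by the single choice $x_1=x$: the bracket then collapses to $\int\omega\,|\Delta|\,d\Delta$, and integrating over $x'$ against $f_s$ reproduces exactly $l_s(x)$. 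Thus if $f_s(x)>0$ for some $x>0$, the integrand at $x_1=x$ equals $\big(f_s(x)\big)^2 l_s(x)>0$, contradicting stationarity directly. This sidesteps your worry about needing the support of $f_s$ to be rich enough to force $\omega=\delta(\Delta)$.

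There is, however, a genuine gap in your argument for~(iii). You write that ``the only other support point carries vanishing probability, so the integrand vanishes,'' but this is not sufficient: the oligarch's mass in $f_c$ is $1/N$ while $l_s(N\langle x\rangle)$ could in principle grow with $N$, and the product need not tend to zero. The paper closes this via Corollary~\ref{cor}, a uniform bound $l(x,t)\le 2\langle x\rangle$ for all $x>0$ and all $t$, obtained by applying Lemma~1 (an integral inequality for $\int_a^b|z|g(z)\,dz$) to the transfer rate $\omega$ together with the unbiased condition~(\ref{eq:unbiased}). With that bound the tail contribution to $L_s$ is at most $\lim_{N\to\infty}(1/N)=0$, and~(iii) follows; without it your argument for the liquidity claim is incomplete.
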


The proposition 3 has already been observed, considering finite size systems, in all the unbiased models reviewed in the Introduction~\cite{moukarzel2007wealth, iglesias2012entropy, cardoso2020wealth, boghosian2017oligarchy, bouleau2017impact}.

\subsection{Proof of propositions}

Before the proof of the propositions, we prove a corollary of a lemma that will be necessary 

\begin{lemma} Let be $g : [a, b] \rightarrow \mathbb{R}$ a function such that $\int_a^bdzg(z) = 1$. Then~\cite{zabandan2013several}
	\begin{equation}
	\int_a^bdz|z|g(z) \leq \frac{b|a| - a|b|}{b-a} + \frac{|b|-|a|}{b-a}\int_a^bdzzg(z).
	\end{equation}
\end{lemma}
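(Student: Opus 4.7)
The plan is to recognize the affine function
\[
L(z) \;=\; \frac{b|a| - a|b|}{b-a} \;+\; \frac{|b|-|a|}{b-a}\,z
\]
as the secant line of $z \mapsto |z|$ through the endpoints $(a,|a|)$ and $(b,|b|)$, so that the right-hand side of the claimed inequality is simply $\int_a^b L(z)\,g(z)\,dz$ once one uses $\int_a^b g\,dz = 1$ to pull the constant term out of an integral. Verifying this identification is a routine application of the two-point form of a line, which I would dispatch first as a warm-up computation.

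The key analytical input is convexity of the absolute value. For every $z \in [a,b]$, writing $z = (1-t)a + tb$ with $t = (z-a)/(b-a) \in [0,1]$, convexity of $|\cdot|$ gives
\[
|z| \;\leq\; (1-t)|a| + t|b| \;=\; L(z).
\]
This pointwise chord bound is the geometric heart of the lemma: a convex function lies below the chord joining its endpoint values on a closed interval. To finish I would multiply this pointwise inequality by $g(z)$ and integrate over $[a,b]$, using linearity and the normalization $\int_a^b g\,dz = 1$ to recover exactly the right-hand side of the claim.

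The main (and essentially only) obstacle is the sign of $g$: the final integration step preserves the inequality only when $g \geq 0$, since against a signed density a pointwise bound can be reversed on regions where $g$ is negative. The statement as written does not explicitly require $g \geq 0$, but in every place the corollary will be invoked in the paper $g$ is effectively a probability density (arising either from $f(\cdot,t)$ itself or from the normalized transition kernel $\omega$ via Eq.~\eqref{eq:normal}), so the positivity hypothesis is implicit and harmless. I would make it explicit at the outset of the proof, after which the argument reduces to the two clean steps above and no further subtleties arise.
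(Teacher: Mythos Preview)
Your argument is correct: identifying the right-hand side as $\int_a^b L(z)\,g(z)\,dz$ with $L$ the secant of $z\mapsto|z|$ through $(a,|a|)$ and $(b,|b|)$, invoking convexity for the pointwise bound $|z|\le L(z)$ on $[a,b]$, and integrating against a nonnegative $g$ gives precisely the stated inequality. Your observation that $g\ge 0$ is an unstated but necessary hypothesis is also right, and your justification that in the paper's applications $g$ is always a normalized transition kernel $\omega$ (hence nonnegative) is exactly the right fix.

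Note, however, that the paper does not actually supply its own proof of this lemma: it is quoted with a citation to Zabandan and K{\i}l{\i}{\c{c}}man and then used directly in the proof of Corollary~\ref{cor}. So there is no in-paper argument to compare against; your convexity/secant proof is the standard one and is entirely adequate here.
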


\begin{corollary}\label{cor} In a system of unbiased binary exchanges, for all $t$ and $x>0$, we have that
	\begin{equation}
	l(x,t) \leq 2\langle x \rangle.
	\end{equation}
\end{corollary}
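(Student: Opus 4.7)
The approach is to apply Lemma 1 directly to the inner $\Delta$-integral appearing in the definition of $l(x,t)$. For each fixed $x>0$ and $x'\geq 0$, take $g(\Delta) := \omega_{[x,x']\rightarrow[x+\Delta,x'-\Delta]}$ on the interval $[a,b] = [-x,\, x']$. The normalization condition Eq.~\ref{eq:normal} says $\int_{a}^{b} g(\Delta)\,d\Delta = 1$, so Lemma 1 applies and produces an upper bound on $\int_{-x}^{x'} |\Delta|\,\omega\,d\Delta$ in terms of its first moment $\int_{-x}^{x'} \Delta\,\omega\,d\Delta$.

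Two simplifications then do essentially all of the work. First, the unbiased hypothesis Eq.~\ref{eq:unbiased} makes the first-moment term vanish outright. Second, since $x>0$ and $x'\geq 0$ we have $|-x|=x$ and $|x'|=x'$, so the surviving ``chord'' term collapses to
\begin{equation*}
\frac{b\,|a| - a\,|b|}{b-a} \;=\; \frac{x'\cdot x \,+\, x\cdot x'}{x+x'} \;=\; \frac{2xx'}{x+x'}.
\end{equation*}
This delivers a pointwise (in $x'$) bound on the inner integral by the harmonic mean of $x$ and $x'$.

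Substituting back into Eq.~\ref{eq:mob} yields $l(x,t) \leq \int_0^\infty dx'\, \tfrac{2xx'}{x+x'}\, f(x',t)$. Because $x/(x+x')\leq 1$, we have $\tfrac{2xx'}{x+x'} \leq 2x'$, whence the remaining integral is at most $2\int_0^\infty x'\,f(x',t)\,dx' = 2\langle x\rangle$ by property (iii) of Definition 1. No step is technically demanding: the corollary is essentially a packaged consequence of Lemma 1 together with the normalization and unbiased conditions on $\omega$. The only point that deserves a moment of care is confirming that Lemma 1 is being applied in the regime $a<0<b$, where the $|a|,|b|$ contributions do not trivially cancel --- but the lemma is stated precisely to cover that configuration, so the argument goes through directly.
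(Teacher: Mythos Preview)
Your proof is correct and follows essentially the same route as the paper: apply Lemma~1 to the inner $\Delta$-integral with $[a,b]=[-x,x']$ and $g=\omega$, use Eq.~\ref{eq:unbiased} to kill the first-moment term, reduce the chord term to $2xx'/(x+x')$, and then bound $x/(x+x')\le 1$ before integrating against $f(x',t)$ to obtain $2\langle x\rangle$. The only cosmetic difference is that the paper carries the $(x'-x)/(x+x')$ coefficient one line further before invoking the unbiased condition, whereas you drop it immediately; the substance is identical.
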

\begin{proof}[{\bf proof of corollary 1}]
	By the definition of mobility, the Lemma 1 implies that
	\begin{eqnarray}
	\nonumber
	l(x,t) = \int_0^{\infty}dx'\int_{-x}^{x'} d\Delta \>|\Delta|\> \omega_{[x,x']\rightarrow[x+\Delta,x'-\Delta]} f(x',t) \leq \\
	\leq\int_0^{\infty}dx' f(x',t)\bigg[\frac{2xx'}{x+x'} + \frac{x-x'}{x+x'} \int_{-x}^{x'} d\Delta \>\Delta\> \omega_{[x,x']\rightarrow[x+\Delta,x'-\Delta]} \bigg].
	\end{eqnarray}
	
	By the condiciton of unbiased exchanges (Eq.~\ref{eq:unbiased}), the second term in the sum is zero. Then, we get
	\begin{equation}
	l(x,t) \leq 2\int_0^{\infty}dx' f(x',t)\>\frac{xx'}{x+x'}.
	\end{equation}
	
	now, since $x \leq x + x'$, the finally poof the corollary
	\begin{equation}
	l(x,t) \leq 2\int_0^{\infty}dx' f(x',t)\>x' = 2\langle x\rangle.
	\end{equation}
\end{proof}

\begin{proof}[{\bf proof of proposition 1}]
	Let assume that $x = 0$ e $x' > 0$. So, the Eq.~\ref{eq:unbiased} implies that
	\begin{equation}
	\int_{0}^{x'} d\Delta \>\Delta \>\omega_{[0,x']\rightarrow[\Delta,x'-\Delta]} = 0.
	\end{equation}
	Since in this particular case the integrand is always non-negative, the integral only becomes zero if
	\begin{equation}\Delta \>\omega_{[0,x']\rightarrow[\Delta,x'-\Delta]} = 0\>\>\>,\>\>\>\forall \Delta \in [0,x'].
	\end{equation}
	
	So, regarding the normalization (Eq.~\ref{eq:normal}), this condition is only satisfied if
	\begin{equation}\label{eq:ab}
	\omega_{[0,x']\rightarrow[\Delta,x'-\Delta]} = \delta(\Delta).
	\end{equation}
	
	Now, using the Eq.~\ref{eq:mob}, we finally have
	\begin{equation}
	l(0,t) = \int_0^{\infty}dx'\int_{0}^{x'} d\Delta \>|\Delta|\>  \delta(\Delta) f(x',t) = 0,
	\end{equation}
	that is, $x=0$ is an absorbing state.
\end{proof}

\begin{proof}[{\bf proof of proposition 2}]
As written below, the Gini Index at time $t$ is given by
	\begin{equation}
	G(t) = \frac{1}{2\langle x\rangle}\int_{0}^{\infty} dx \int_{0}^{\infty} dx_1\> | x-x_1| \> f(x, t) f (x_1, t).
	\end{equation}
	So, its time evolution is given by 
	\begin{eqnarray}\label{eq:gdt}
	\frac{dG(t)}{dt} = 
	\frac{1}{2\langle x\rangle}\int_{0}^{\infty} dx \int_{0}^{\infty} dx_1\> | x-x_1| \>\bigg[\frac{\partial f(x,t)}{\partial t}f(x_1,t) + \frac{\partial f(x_1,t)}{\partial t}f(x,t)\bigg].
	\end{eqnarray}
	Since the integration is over the entire range of $x$ and $x_1$ and they play a symmetric role, both terms inside the brackets contribute in the same way. Then, Eq.~\ref{eq:gdt} simplifies to
	\begin{equation}
	\frac{dG(t)}{dt}= \frac{1}{\langle x \rangle}\int_{0}^{\infty} dx \int_{0}^{\infty} dx_1\> | x-x_1| \>f(x_1,t)\>\frac{\partial f(x,t)}{\partial t}.
	\label{eq:gdt2}
	\end{equation}
	Defining 
    \begin{equation}\label{eq:phi}
    \phi(x,t) \equiv  \int_{0}^{\infty} d x_1\> | x-x_1| \>f(x_1,t),
    \end{equation}
    we rewrite Eq.~\ref{eq:gdt2} as 
    \begin{equation}
    \frac{dG(t)}{dt}= \frac{1}{\langle x \rangle}\int_{0}^{\infty} dx  \>\phi(x,t) \>\frac{\partial f(x,t)}{\partial t}.
    \label{eq:gdt3}
    \end{equation}
    
    Now, using Eq.~\ref{eq:master}, we have that 
    \begin{eqnarray}
    \nonumber
    \frac{dG(t)}{dt} = \frac{1}{\langle x \rangle}\int_{0}^{\infty} dx \int_{0}^{\infty} dx' \int_{-x}^{x'} d\Delta\> \phi(x,t)\> \omega_{[x+\Delta,x'-\Delta]\rightarrow[x,x']}f(x+\Delta,t) f(x'-\Delta,t) -\\
    -\frac{1}{\langle x \rangle}\int_{0}^{\infty} dx \int_{0}^{\infty} dx' \int_{-x}^{x'} d\Delta\> \phi(x,t)\> \omega_{[x,x']\rightarrow[x+\Delta,x'-\Delta]}f(x,t) f(x',t).\>\>\>\>\>
    \label{eq:dgdt}
    \end{eqnarray}
    
    We can represent the first integral by
    \begin{eqnarray}
    \nonumber
    I = \int_{0}^{\infty} dx \int_{0}^{\infty} dx' \int_{-x}^{x'} d\Delta\> \phi(x,t)\> \omega_{[x+\Delta,x'-\Delta]\rightarrow[x,x']} \> f(x+\Delta,t) f(x'-\Delta,t)=\\
    \nonumber
    = \int_{-\infty}^{\infty} dx \int_{-\infty}^{\infty} dx' \int_{-\infty}^{\infty} d\Delta \> \phi(x,t) \>\omega_{[x+\Delta,x'-\Delta]\rightarrow[x,x']} \> f(x+\Delta,t) f(x'-\Delta,t)  \times\\
    \times \theta(x) \theta(x') \theta(\Delta + x)\theta(x'-\Delta).
    \end{eqnarray}
    
    Defining $y = x + \Delta$ e $y' = x'-\Delta$, we get
    \begin{eqnarray}
    \nonumber
    I = \int_{-\infty}^{\infty} dy \int_{-\infty}^{\infty} dy' \int_{-\infty}^{\infty} d\Delta \>\phi(y-\Delta,t)\>  
    \omega_{[y,y']\rightarrow[y-\Delta,y'+\Delta]} \> f(y,t) f(y',t)\times\\
    \times \theta(y-\Delta)\theta(y'+\Delta)\theta(y)\theta(y').
    \end{eqnarray}
    
    With a new variable substitution, $\Delta' = -\Delta$, the integral becomes
    \begin{eqnarray}
    \nonumber
    I = \int_{-\infty}^{\infty} dy \int_{-\infty}^{\infty} dy' \int_{-\infty}^{\infty} d\Delta' \>\phi(y+\Delta',t)\>  
    \omega_{[y,y']\rightarrow[y+\Delta',y'-\Delta']} \> f(y,t) f(y',t)\times\\
    \nonumber
    \times \theta(y+\Delta')\theta(y'-\Delta')\theta(y)\theta(y') = \\
    =\int_{0}^{\infty} dy \int_{0}^{\infty} dy' \int_{-y}^{y'} d\Delta' \>\phi(y+\Delta',t)\>  
    \omega_{[y,y']\rightarrow[y+\Delta',y'-\Delta']} \> f(y,t) f(y',t),
    \label{eq:inty}
    \end{eqnarray}
     in which the change in the differential signal was compensated by permuting the limits in the integral over $\Delta'$.
    
    Comparing the integral Eq.~\ref{eq:inty} with the second integral of Eq.\ref{eq:dgdt}, we see that the only difference between them, besides the name of variables, is the signal and the argument of $\phi$. Thus, the time derivative of Gini, Eq.~\ref{eq:dgdt}, becomes
    \begin{eqnarray}
    \nonumber
    \frac{dG(t)}{dt} = \frac{1}{\langle x \rangle}\int_{0}^{\infty} dx \int_{0}^{\infty} dx'f(x,t) f(x',t) \times\\
    \times\Bigg[\int_{-x}^{x'} d\Delta\> \omega_{[x,x']\rightarrow[x+\Delta,x'-\Delta]}\> \Big(\phi(x + \Delta,t) - \phi(x,t)\Big) \Bigg].
    \end{eqnarray}
        
    With the definition of $\phi$ (Eq.~\ref{eq:phi}) and using the normalization of the transfer rate (Eq.~\ref{eq:normal}), we finally get
   	\begin{eqnarray}\label{eq:gini_ev}
	\nonumber
	\frac{dG(t)}{dt} = \frac{1}{\langle x \rangle}\int_{0}^{\infty} dx \int_{0}^{\infty} dx'\int_{0}^{\infty} d x_1\> f(x,t) f(x',t) f(x_1,t)\times \\
	\times \>\bigg[\int_{-x}^{x'} d\Delta\> \omega_{[x,x']\rightarrow[x+\Delta,x'-\Delta]}\>|x+\Delta-x_1| - |x-x_1|\bigg]\>.
	\end{eqnarray}
	
	To prove the proposition, it is enough to show that the term in the brackets is always non-negative. Let us first note that
	\begin{equation}
	-|x + \Delta - x_1| \leq (x + \Delta - x_1) \leq |x + \Delta - x_1|.
	\end{equation}
	Multiplying the terms by $\omega_{[x,x']\rightarrow[x+\Delta,x'-\Delta]}$ and integrating it over $\Delta$ in the domain $[-x,x']$, we then have
	\begin{eqnarray} 
	\nonumber
	-\int_{-x}^{x'} d\Delta\> \omega_{[x,x']\rightarrow[x+\Delta,x'-\Delta]}|x + \Delta - x_1| \leq \\
	\nonumber
	\leq \int_{-x}^{x'} d\Delta\> \omega_{[x,x']\rightarrow[x+\Delta,x'-\Delta]}(x + \Delta - x_1) \leq \\
	\leq \int_{-x}^{x'} d\Delta\> \omega_{[x,x']\rightarrow[x+\Delta,x'-\Delta]}|x + \Delta - x_1|.
	\label{eq:mod}
	\end{eqnarray}
	
	Second, we can note that, due to the property of the unbiased transfer rate (Eq.~\ref{eq:unbiased}) and its normalization (Eq.~\ref{eq:normal}), the second term in the inequality  Eq.~\ref{eq:mod} is
	\begin{equation}
	\int_{-x}^{x'} d\Delta\> \omega_{[x,x']\rightarrow[x+\Delta,x'-\Delta]}(x + \Delta - x_1) = (x - x_1).
	\end{equation}
Therefore, Eq.~\ref{eq:mod} reduces to 
	\begin{eqnarray} 
	\nonumber
	-\int_{-x}^{x'} d\Delta\> \omega_{[x,x']\rightarrow[x+\Delta,x'-\Delta]}|x + \Delta - x_1| \leq\\
	\nonumber
	\leq (x - x_1) \leq \\
	\leq \int_{-x}^{x'} d\Delta\> \omega_{[x,x']\rightarrow[x+\Delta,x'-\Delta]}|x + \Delta - x_1|.
	\end{eqnarray}
	As the only difference between the inferior and superior limits of $(x - x_1)$ is a sign, we finally arrive to
	\begin{equation}
	|x - x_1| \leq \int_{-x}^{x'} d\Delta\> \omega_{[x,x']\rightarrow[x+\Delta,x'-\Delta]}|x + \Delta - x_1|,
	\end{equation}
	which  proves that
	\begin{eqnarray}
	\frac{dG(t)}{dt} \geq 0.
	\end{eqnarray}
\end{proof}

\begin{proof}[{\bf proof of proposition 3}]
	Since the Gini Index is limited above by $G(t) \leq 1$, it will increase until its stationary value $G_s$ associated with the density probability function $f_s(x)$, defined as
	\begin{eqnarray}
	G_s = \lim_{t\rightarrow\infty}G(t) \>\>\>\text{e}\>\>\> f_s(x) = \lim_{t\rightarrow\infty}f(x,t).
	\end{eqnarray}
	
	Also, for each value of $x$, we define its stationary mobility as
	\begin{eqnarray}
	l_s(x) = \lim_{t\rightarrow\infty}l(x,t).
	\end{eqnarray}
	Since the integrand of Eq.~\ref{eq:gini_ev} is always non-negative, the stationary state is only achieved when
	\begin{equation}
	\int_0^{\infty}dx'\Bigg[\int_{-x}^{x'} d\Delta\> \omega_{[x,x']\rightarrow[x+\Delta,x'-\Delta]}\>|x + \Delta - x_1| - |x - x_1|\Bigg]\>f_s(x) f_s(x_1) f_s(x') = 0
	\end{equation}
	for all $x,x_1 \in [0,\infty)$. For $x = 0$, due to the proposition 1, this equality is trivially satisfied. Now, assuming that $x=0$ is the unique absorbing state ($l_s(x) > 0$ for all $x > 0$), there is no $x > 0$ such that $f_s(x) > 0$, since the integrand in the particular case $x_1 = x$ would be
	\begin{equation}
	\int_0^{\infty}dx'\int_{-x}^{x'} d\Delta\> \omega_{[x,x']\rightarrow[x+\Delta,x'-\Delta]}\>|\Delta|f_s(x) f_s(x) f_s(x') = \Big(f_s(x)\Big)^2l_s(x) > 0.
	\end{equation}
	Then, the integrand is only zero if $f_s(x) = 0$ for all $x > 0$. The unique probability density function that satisfies this condition is the absolute oligarchy one
	\begin{equation}
	f_s(x) = \delta (x) + \lim_{N \rightarrow \infty} \frac{\delta (x - \langle x\rangle N)}{N},
	\end{equation}
	associated with the Gini Index $G_s = 1$.
	
	Finally, the stationary value of Liquidity (Eq.~\ref{eq:liquidez}) os given by
	\begin{eqnarray}
	L_s\equiv\lim_{t\rightarrow\infty}L(t) = \frac{1}{2\langle x \rangle}\int_{0}^{\infty} dx\>l_s(x)\>f_s(x) = \frac{l_s(0)}{2\langle x \rangle} + \frac{1}{2\langle x \rangle}\lim_{N\rightarrow \infty}\frac{1}{N}l_s\Big(N\langle x \rangle\Big).
	\end{eqnarray} 
	The first term is zero by the Proposition 1. Moreover, by the Corollary~\ref{cor}, we have that
	\begin{eqnarray}
	\nonumber
	0\leq\frac{1}{2\langle x \rangle}\lim_{N\rightarrow \infty}\frac{1}{N}l_s\Big(N\langle x \rangle\Big) \leq \frac{1}{2\langle x \rangle}\lim_{N\rightarrow \infty}\frac{2\langle x \rangle}{N} = \lim_{N\rightarrow \infty}\frac{1}{N} = 0 
	\Rightarrow \\
	\Rightarrow\frac{1}{2\langle x \rangle}\lim_{N\rightarrow \infty}\frac{1}{N}l_s\Big(N\langle x \rangle\Big) = 0.
	\end{eqnarray}
	
	This proves that $L_s = 0$.

\end{proof}

\section{Conclusions}

The second law of thermodynamics has as a corollary the thermal death of the universe. 
We have presented the proof that any model of binary wealth exchange based on {\em a priori} unbiased rule, {\em i.e.}, no bias toward any agent in every microscopic wealth exchange, will end up in the ``thermal death of the market'' at the maximum unequal state.

There are previous
numerical~\cite{hayes2002,cardoso2020wealth,bouleau2017impact,iglesias2012entropy} and analytical~\cite{bouleau2017impact,boghosian2015h,moukarzel2007wealth,moukarzel2011multiplicative} restricted demonstrations that some unbiased binary exchange rules lead to condensation.
The present one, however, is the first general analytical demonstration that condensation occurs for all unbiased games, regardless of its particularities.
		
Thus, what is the implications of the prepositions here demonstrated in the real world? The worldwide data about economies show a clear tendency toward increasing inequalities~\cite{piketty2018distributional}. As for an example, we can cite the US case, where the fraction of national wealth in hands of the top $1\%$ risen from $22\%$ in $1980$ to $37\%$ in $2015$~\cite{piketty2018distributional}. 

We might think that a system without regulatory policies is {\it fair} since no individual has systematic or {\it a priori} advantages~\cite{fargione2011entrepreneurs,biondi2019inequality,biondi2020financial,klass2007forbes,levy2003investment}. However, we have proved that such apparent ``fairness'' is the cause of the growing inequality.

The only restrictions of the present demonstration are pairwise exchanges and the conservation of the total wealth. 
It is our feeling that the demonstration could be extended to many-body interaction, as we are planning to explore.  The same could be said for a growing economy, with a reformulation of the impartial market.


\begin{thebibliography}{32}%
	\makeatletter
	\providecommand \@ifxundefined [1]{%
		\@ifx{#1\undefined}
	}%
	\providecommand \@ifnum [1]{%
		\ifnum #1\expandafter \@firstoftwo
		\else \expandafter \@secondoftwo
		\fi
	}%
	\providecommand \@ifx [1]{%
		\ifx #1\expandafter \@firstoftwo
		\else \expandafter \@secondoftwo
		\fi
	}%
	\providecommand \natexlab [1]{#1}%
	\providecommand \enquote  [1]{``#1''}%
	\providecommand \bibnamefont  [1]{#1}%
	\providecommand \bibfnamefont [1]{#1}%
	\providecommand \citenamefont [1]{#1}%
	\providecommand \href@noop [0]{\@secondoftwo}%
	\providecommand \href [0]{\begingroup \@sanitize@url \@href}%
	\providecommand \@href[1]{\@@startlink{#1}\@@href}%
	\providecommand \@@href[1]{\endgroup#1\@@endlink}%
	\providecommand \@sanitize@url [0]{\catcode `\\12\catcode `\$12\catcode
		`\&12\catcode `\#12\catcode `\^12\catcode `\_12\catcode `\%12\relax}%
	\providecommand \@@startlink[1]{}%
	\providecommand \@@endlink[0]{}%
	\providecommand \url  [0]{\begingroup\@sanitize@url \@url }%
	\providecommand \@url [1]{\endgroup\@href {#1}{\urlprefix }}%
	\providecommand \urlprefix  [0]{URL }%
	\providecommand \Eprint [0]{\href }%
	\providecommand \doibase [0]{http://dx.doi.org/}%
	\providecommand \selectlanguage [0]{\@gobble}%
	\providecommand \bibinfo  [0]{\@secondoftwo}%
	\providecommand \bibfield  [0]{\@secondoftwo}%
	\providecommand \translation [1]{[#1]}%
	\providecommand \BibitemOpen [0]{}%
	\providecommand \bibitemStop [0]{}%
	\providecommand \bibitemNoStop [0]{.\EOS\space}%
	\providecommand \EOS [0]{\spacefactor3000\relax}%
	\providecommand \BibitemShut  [1]{\csname bibitem#1\endcsname}%
	\let\auto@bib@innerbib\@empty
	\bibitem [{\citenamefont {Slanina}(2013)}]{slanina2013essentials}%
	\BibitemOpen
	\bibfield  {author} {\bibinfo {author} {\bibfnamefont {F.}~\bibnamefont
			{Slanina}},\ }\href@noop {} {\emph {\bibinfo {title} {Essentials of
				econophysics modelling}}}\ (\bibinfo  {publisher} {OUP Oxford},\ \bibinfo
	{year} {2013})\BibitemShut {NoStop}%
	\bibitem [{\citenamefont {Ispolatov}\ \emph {et~al.}(1998)\citenamefont
		{Ispolatov}, \citenamefont {Krapivsky},\ and\ \citenamefont
		{Redner}}]{ispolatov1998wealth}%
	\BibitemOpen
	\bibfield  {author} {\bibinfo {author} {\bibfnamefont {S.}~\bibnamefont
			{Ispolatov}}, \bibinfo {author} {\bibfnamefont {P.~L.}\ \bibnamefont
			{Krapivsky}}, \ and\ \bibinfo {author} {\bibfnamefont {S.}~\bibnamefont
			{Redner}},\ }\href@noop {} {\bibfield  {journal} {\bibinfo  {journal} {The
				European Physical Journal B-Condensed Matter and Complex Systems}\ }\textbf
		{\bibinfo {volume} {2}},\ \bibinfo {pages} {267} (\bibinfo {year}
		{1998})}\BibitemShut {NoStop}%
	\bibitem [{\citenamefont {Boghosian}\ \emph {et~al.}(2017)\citenamefont
		{Boghosian}, \citenamefont {Devitt-Lee}, \citenamefont {Johnson},
		\citenamefont {Li}, \citenamefont {Marcq},\ and\ \citenamefont
		{Wang}}]{boghosian2017oligarchy}%
	\BibitemOpen
	\bibfield  {author} {\bibinfo {author} {\bibfnamefont {B.~M.}\ \bibnamefont
			{Boghosian}}, \bibinfo {author} {\bibfnamefont {A.}~\bibnamefont
			{Devitt-Lee}}, \bibinfo {author} {\bibfnamefont {M.}~\bibnamefont {Johnson}},
		\bibinfo {author} {\bibfnamefont {J.}~\bibnamefont {Li}}, \bibinfo {author}
		{\bibfnamefont {J.~A.}\ \bibnamefont {Marcq}}, \ and\ \bibinfo {author}
		{\bibfnamefont {H.}~\bibnamefont {Wang}},\ }\href@noop {} {\bibfield
		{journal} {\bibinfo  {journal} {Physica A: Statistical Mechanics and its
				Applications}\ }\textbf {\bibinfo {volume} {476}},\ \bibinfo {pages} {15}
		(\bibinfo {year} {2017})}\BibitemShut {NoStop}%
	\bibitem [{\citenamefont {Patriarca}\ \emph {et~al.}(2010)\citenamefont
		{Patriarca}, \citenamefont {Heinsalu},\ and\ \citenamefont
		{Chakraborti}}]{patriarca2010basic}%
	\BibitemOpen
	\bibfield  {author} {\bibinfo {author} {\bibfnamefont {M.}~\bibnamefont
			{Patriarca}}, \bibinfo {author} {\bibfnamefont {E.}~\bibnamefont {Heinsalu}},
		\ and\ \bibinfo {author} {\bibfnamefont {A.}~\bibnamefont {Chakraborti}},\
	}\href@noop {} {\bibfield  {journal} {\bibinfo  {journal} {The European
				Physical Journal B}\ }\textbf {\bibinfo {volume} {73}},\ \bibinfo {pages}
		{145} (\bibinfo {year} {2010})}\BibitemShut {NoStop}%
	\bibitem [{\citenamefont {Yakovenko}\ and\ \citenamefont
		{Rosser}(2009)}]{yakovenko2009colloquium}%
	\BibitemOpen
	\bibfield  {author} {\bibinfo {author} {\bibfnamefont {V.~M.}\ \bibnamefont
			{Yakovenko}}\ and\ \bibinfo {author} {\bibfnamefont {J.~B.}\ \bibnamefont
			{Rosser}},\ }\href@noop {} {\bibfield  {journal} {\bibinfo  {journal}
			{Reviews of Modern Physics}\ }\textbf {\bibinfo {volume} {81}},\ \bibinfo
		{pages} {1703} (\bibinfo {year} {2009})}\BibitemShut {NoStop}%
	\bibitem [{\citenamefont {Hayes}(2002)}]{hayes2002}%
	\BibitemOpen
	\bibfield  {author} {\bibinfo {author} {\bibfnamefont {B.}~\bibnamefont
			{Hayes}},\ }\href@noop {} {\bibfield  {journal} {\bibinfo  {journal}
			{American Scientis}\ }\textbf {\bibinfo {volume} {90}},\ \bibinfo {pages}
		{400} (\bibinfo {year} {2002})}\BibitemShut {NoStop}%
	\bibitem [{\citenamefont {Angle}(1986)}]{angle1986surplus}%
	\BibitemOpen
	\bibfield  {author} {\bibinfo {author} {\bibfnamefont {J.}~\bibnamefont
			{Angle}},\ }\href@noop {} {\bibfield  {journal} {\bibinfo  {journal} {Social
				Forces}\ }\textbf {\bibinfo {volume} {65}},\ \bibinfo {pages} {293} (\bibinfo
		{year} {1986})}\BibitemShut {NoStop}%
	\bibitem [{\citenamefont {Lux}(2005)}]{lux2005emergent}%
	\BibitemOpen
	\bibfield  {author} {\bibinfo {author} {\bibfnamefont {T.}~\bibnamefont
			{Lux}},\ }in\ \href@noop {} {\emph {\bibinfo {booktitle} {Econophysics of
				wealth distributions}}}\ (\bibinfo  {publisher} {Springer},\ \bibinfo {year}
	{2005})\ pp.\ \bibinfo {pages} {51--60}\BibitemShut {NoStop}%
	\bibitem [{\citenamefont {Cardoso}\ \emph {et~al.}(2020)\citenamefont
		{Cardoso}, \citenamefont {Gon{\c{c}}alves},\ and\ \citenamefont
		{Iglesias}}]{cardoso2020wealth}%
	\BibitemOpen
	\bibfield  {author} {\bibinfo {author} {\bibfnamefont {B.-H.~F.}\
			\bibnamefont {Cardoso}}, \bibinfo {author} {\bibfnamefont {S.}~\bibnamefont
			{Gon{\c{c}}alves}}, \ and\ \bibinfo {author} {\bibfnamefont {J.~R.}\
			\bibnamefont {Iglesias}},\ }\href@noop {} {\bibfield  {journal} {\bibinfo
			{journal} {Physica A: Statistical Mechanics and its Applications}\ }\textbf
		{\bibinfo {volume} {551}},\ \bibinfo {pages} {124201} (\bibinfo {year}
		{2020})}\BibitemShut {NoStop}%
	\bibitem [{\citenamefont {Chatterjee}\ \emph {et~al.}(2004)\citenamefont
		{Chatterjee}, \citenamefont {Chakrabarti},\ and\ \citenamefont
		{Manna}}]{chatterjee2004pareto}%
	\BibitemOpen
	\bibfield  {author} {\bibinfo {author} {\bibfnamefont {A.}~\bibnamefont
			{Chatterjee}}, \bibinfo {author} {\bibfnamefont {B.~K.}\ \bibnamefont
			{Chakrabarti}}, \ and\ \bibinfo {author} {\bibfnamefont {S.}~\bibnamefont
			{Manna}},\ }\href@noop {} {\bibfield  {journal} {\bibinfo  {journal} {Physica
				A: Statistical Mechanics and its Applications}\ }\textbf {\bibinfo {volume}
			{335}},\ \bibinfo {pages} {155} (\bibinfo {year} {2004})}\BibitemShut
	{NoStop}%
	\bibitem [{\citenamefont {Chakraborti}\ and\ \citenamefont
		{Chakrabarti}(2000)}]{chakraborti2000statistical}%
	\BibitemOpen
	\bibfield  {author} {\bibinfo {author} {\bibfnamefont {A.}~\bibnamefont
			{Chakraborti}}\ and\ \bibinfo {author} {\bibfnamefont {B.~K.}\ \bibnamefont
			{Chakrabarti}},\ }\href@noop {} {\bibfield  {journal} {\bibinfo  {journal}
			{The European Physical Journal B-Condensed Matter and Complex Systems}\
		}\textbf {\bibinfo {volume} {17}},\ \bibinfo {pages} {167} (\bibinfo {year}
		{2000})}\BibitemShut {NoStop}%
	\bibitem [{\citenamefont {Bouleau}\ and\ \citenamefont
		{Chorro}(2017)}]{bouleau2017impact}%
	\BibitemOpen
	\bibfield  {author} {\bibinfo {author} {\bibfnamefont {N.}~\bibnamefont
			{Bouleau}}\ and\ \bibinfo {author} {\bibfnamefont {C.}~\bibnamefont
			{Chorro}},\ }\href@noop {} {\bibfield  {journal} {\bibinfo  {journal}
			{Physica A: Statistical Mechanics and its Applications}\ }\textbf {\bibinfo
			{volume} {479}},\ \bibinfo {pages} {379} (\bibinfo {year}
		{2017})}\BibitemShut {NoStop}%
	\bibitem [{\citenamefont {Iglesias}\ and\ \citenamefont
		{De~Almeida}(2012)}]{iglesias2012entropy}%
	\BibitemOpen
	\bibfield  {author} {\bibinfo {author} {\bibfnamefont {J.~R.}\ \bibnamefont
			{Iglesias}}\ and\ \bibinfo {author} {\bibfnamefont {R.~M.~C.}\ \bibnamefont
			{De~Almeida}},\ }\href@noop {} {\bibfield  {journal} {\bibinfo  {journal}
			{The European Physical Journal B}\ }\textbf {\bibinfo {volume} {85}},\
		\bibinfo {pages} {85} (\bibinfo {year} {2012})}\BibitemShut {NoStop}%
	\bibitem [{\citenamefont {Moukarzel}\ \emph {et~al.}(2007)\citenamefont
		{Moukarzel}, \citenamefont {Gon{\c{c}}alves}, \citenamefont {Iglesias},
		\citenamefont {Rodr{\'\i}guez-Achach},\ and\ \citenamefont
		{Huerta-Quintanilla}}]{moukarzel2007wealth}%
	\BibitemOpen
	\bibfield  {author} {\bibinfo {author} {\bibfnamefont {C.~F.}\ \bibnamefont
			{Moukarzel}}, \bibinfo {author} {\bibfnamefont {S.}~\bibnamefont
			{Gon{\c{c}}alves}}, \bibinfo {author} {\bibfnamefont {J.~R.}\ \bibnamefont
			{Iglesias}}, \bibinfo {author} {\bibfnamefont {M.}~\bibnamefont
			{Rodr{\'\i}guez-Achach}}, \ and\ \bibinfo {author} {\bibfnamefont
			{R.}~\bibnamefont {Huerta-Quintanilla}},\ }\href@noop {} {\bibfield
		{journal} {\bibinfo  {journal} {The European Physical Journal Special
				Topics}\ }\textbf {\bibinfo {volume} {143}},\ \bibinfo {pages} {75} (\bibinfo
		{year} {2007})}\BibitemShut {NoStop}%
	\bibitem [{\citenamefont {Lima}\ \emph {et~al.}(2020)\citenamefont {Lima},
		\citenamefont {Vieira},\ and\ \citenamefont {Anteneodo}}]{lima2020nonlinear}%
	\BibitemOpen
	\bibfield  {author} {\bibinfo {author} {\bibfnamefont {H.}~\bibnamefont
			{Lima}}, \bibinfo {author} {\bibfnamefont {A.~R.}\ \bibnamefont {Vieira}}, \
		and\ \bibinfo {author} {\bibfnamefont {C.}~\bibnamefont {Anteneodo}},\
	}\href@noop {} {\bibfield  {journal} {\bibinfo  {journal} {arXiv preprint
				arXiv:2007.11680}\ } (\bibinfo {year} {2020})}\BibitemShut {NoStop}%
	\bibitem [{\citenamefont {Iglesias}\ \emph {et~al.}(2020)\citenamefont
		{Iglesias}, \citenamefont {Cardoso},\ and\ \citenamefont
		{Gon{\c{c}}alves}}]{iglesias2020inequality}%
	\BibitemOpen
	\bibfield  {author} {\bibinfo {author} {\bibfnamefont {J.~R.}\ \bibnamefont
			{Iglesias}}, \bibinfo {author} {\bibfnamefont {B.-H.~F.}\ \bibnamefont
			{Cardoso}}, \ and\ \bibinfo {author} {\bibfnamefont {S.}~\bibnamefont
			{Gon{\c{c}}alves}},\ }\href@noop {} {\bibfield  {journal} {\bibinfo
			{journal} {arXiv preprint arXiv:2005.06106}\ } (\bibinfo {year}
		{2020})}\BibitemShut {NoStop}%
	\bibitem [{\citenamefont {Li}\ \emph {et~al.}(2019)\citenamefont {Li},
		\citenamefont {Boghosian},\ and\ \citenamefont {Li}}]{li2019affine}%
	\BibitemOpen
	\bibfield  {author} {\bibinfo {author} {\bibfnamefont {J.}~\bibnamefont
			{Li}}, \bibinfo {author} {\bibfnamefont {B.~M.}\ \bibnamefont {Boghosian}}, \
		and\ \bibinfo {author} {\bibfnamefont {C.}~\bibnamefont {Li}},\ }\href@noop
	{} {\bibfield  {journal} {\bibinfo  {journal} {Physica A: Statistical
				Mechanics and its Applications}\ }\textbf {\bibinfo {volume} {516}},\
		\bibinfo {pages} {423} (\bibinfo {year} {2019})}\BibitemShut {NoStop}%
	\bibitem [{\citenamefont {Boghosian}(2014)}]{boghosian2014kinetics}%
	\BibitemOpen
	\bibfield  {author} {\bibinfo {author} {\bibfnamefont {B.~M.}\ \bibnamefont
			{Boghosian}},\ }\href@noop {} {\bibfield  {journal} {\bibinfo  {journal}
			{Physical Review E}\ }\textbf {\bibinfo {volume} {89}},\ \bibinfo {pages}
		{042804} (\bibinfo {year} {2014})}\BibitemShut {NoStop}%
	\bibitem [{\citenamefont {Bustos-Guajardo}\ and\ \citenamefont
		{Moukarzel}(2016)}]{bustos2016wealth}%
	\BibitemOpen
	\bibfield  {author} {\bibinfo {author} {\bibfnamefont {R.}~\bibnamefont
			{Bustos-Guajardo}}\ and\ \bibinfo {author} {\bibfnamefont {C.~F.}\
			\bibnamefont {Moukarzel}},\ }\href@noop {} {\bibfield  {journal} {\bibinfo
			{journal} {International Journal of Modern Physics C}\ }\textbf {\bibinfo
			{volume} {27}},\ \bibinfo {pages} {1650094} (\bibinfo {year}
		{2016})}\BibitemShut {NoStop}%
	\bibitem [{\citenamefont {Bassetti}\ and\ \citenamefont
		{Toscani}(2010)}]{bassetti2010explicit}%
	\BibitemOpen
	\bibfield  {author} {\bibinfo {author} {\bibfnamefont {F.}~\bibnamefont
			{Bassetti}}\ and\ \bibinfo {author} {\bibfnamefont {G.}~\bibnamefont
			{Toscani}},\ }\href@noop {} {\bibfield  {journal} {\bibinfo  {journal}
			{Physical Review E}\ }\textbf {\bibinfo {volume} {81}},\ \bibinfo {pages}
		{066115} (\bibinfo {year} {2010})}\BibitemShut {NoStop}%
	\bibitem [{\citenamefont {Pitaevskii}\ and\ \citenamefont
		{Lifshitz}(2012)}]{pitaevskii2012physical}%
	\BibitemOpen
	\bibfield  {author} {\bibinfo {author} {\bibfnamefont {L.}~\bibnamefont
			{Pitaevskii}}\ and\ \bibinfo {author} {\bibfnamefont {E.}~\bibnamefont
			{Lifshitz}},\ }\href@noop {} {\emph {\bibinfo {title} {Physical kinetics}}},\
	Vol.~\bibinfo {volume} {10}\ (\bibinfo  {publisher} {Butterworth-Heinemann},\
	\bibinfo {year} {2012})\BibitemShut {NoStop}%
	\bibitem [{\citenamefont {Sen}\ \emph {et~al.}(1997)\citenamefont {Sen},
		\citenamefont {Sen}, \citenamefont {Amartya}, \citenamefont {Foster},
		\citenamefont {Foster} \emph {et~al.}}]{sen1997economic}%
	\BibitemOpen
	\bibfield  {author} {\bibinfo {author} {\bibfnamefont {A.}~\bibnamefont
			{Sen}}, \bibinfo {author} {\bibfnamefont {M.~A.}\ \bibnamefont {Sen}},
		\bibinfo {author} {\bibfnamefont {S.}~\bibnamefont {Amartya}}, \bibinfo
		{author} {\bibfnamefont {J.~E.}\ \bibnamefont {Foster}}, \bibinfo {author}
		{\bibfnamefont {J.~E.}\ \bibnamefont {Foster}},  \emph {et~al.},\ }\href@noop
	{} {\emph {\bibinfo {title} {On economic inequality}}}\ (\bibinfo
	{publisher} {Oxford University Press},\ \bibinfo {year} {1997})\BibitemShut
	{NoStop}%
	\bibitem [{\citenamefont {Boghosian}\ \emph {et~al.}(2015)\citenamefont
		{Boghosian}, \citenamefont {Johnson},\ and\ \citenamefont
		{Marcq}}]{boghosian2015h}%
	\BibitemOpen
	\bibfield  {author} {\bibinfo {author} {\bibfnamefont {B.~M.}\ \bibnamefont
			{Boghosian}}, \bibinfo {author} {\bibfnamefont {M.}~\bibnamefont {Johnson}},
		\ and\ \bibinfo {author} {\bibfnamefont {J.~A.}\ \bibnamefont {Marcq}},\
	}\href@noop {} {\bibfield  {journal} {\bibinfo  {journal} {Journal of
				Statistical Physics}\ }\textbf {\bibinfo {volume} {161}},\ \bibinfo {pages}
		{1339} (\bibinfo {year} {2015})}\BibitemShut {NoStop}%
	\bibitem [{\citenamefont {Schilling}(2017)}]{schilling2017measures}%
	\BibitemOpen
	\bibfield  {author} {\bibinfo {author} {\bibfnamefont {R.~L.}\ \bibnamefont
			{Schilling}},\ }\href@noop {} {\emph {\bibinfo {title} {Measures, integrals
				and martingales}}}\ (\bibinfo  {publisher} {Cambridge University Press},\
	\bibinfo {year} {2017})\BibitemShut {NoStop}%
	\bibitem [{\citenamefont {Zabandan}\ and\ \citenamefont
		{K{\i}l{\i}{\c{c}}man}(2013)}]{zabandan2013several}%
	\BibitemOpen
	\bibfield  {author} {\bibinfo {author} {\bibfnamefont {G.}~\bibnamefont
			{Zabandan}}\ and\ \bibinfo {author} {\bibfnamefont {A.}~\bibnamefont
			{K{\i}l{\i}{\c{c}}man}},\ }\href@noop {} {\bibfield  {journal} {\bibinfo
			{journal} {Journal of Inequalities and Applications}\ }\textbf {\bibinfo
			{volume} {2013}},\ \bibinfo {pages} {27} (\bibinfo {year}
		{2013})}\BibitemShut {NoStop}%
	\bibitem [{\citenamefont {Moukarzel}(2011)}]{moukarzel2011multiplicative}%
	\BibitemOpen
	\bibfield  {author} {\bibinfo {author} {\bibfnamefont {C.~F.}\ \bibnamefont
			{Moukarzel}},\ }\href@noop {} {\bibfield  {journal} {\bibinfo  {journal}
			{Journal of Statistical Mechanics: Theory and Experiment}\ }\textbf {\bibinfo
			{volume} {2011}},\ \bibinfo {pages} {P08023} (\bibinfo {year}
		{2011})}\BibitemShut {NoStop}%
	\bibitem [{\citenamefont {Piketty}\ \emph {et~al.}(2018)\citenamefont
		{Piketty}, \citenamefont {Saez},\ and\ \citenamefont
		{Zucman}}]{piketty2018distributional}%
	\BibitemOpen
	\bibfield  {author} {\bibinfo {author} {\bibfnamefont {T.}~\bibnamefont
			{Piketty}}, \bibinfo {author} {\bibfnamefont {E.}~\bibnamefont {Saez}}, \
		and\ \bibinfo {author} {\bibfnamefont {G.}~\bibnamefont {Zucman}},\
	}\href@noop {} {\bibfield  {journal} {\bibinfo  {journal} {The Quarterly
				Journal of Economics}\ }\textbf {\bibinfo {volume} {133}},\ \bibinfo {pages}
		{553} (\bibinfo {year} {2018})}\BibitemShut {NoStop}%
	\bibitem [{\citenamefont {Fargione}\ \emph {et~al.}(2011)\citenamefont
		{Fargione}, \citenamefont {Lehman},\ and\ \citenamefont
		{Polasky}}]{fargione2011entrepreneurs}%
	\BibitemOpen
	\bibfield  {author} {\bibinfo {author} {\bibfnamefont {J.~E.}\ \bibnamefont
			{Fargione}}, \bibinfo {author} {\bibfnamefont {C.}~\bibnamefont {Lehman}}, \
		and\ \bibinfo {author} {\bibfnamefont {S.}~\bibnamefont {Polasky}},\
	}\href@noop {} {\bibfield  {journal} {\bibinfo  {journal} {PloS one}\
		}\textbf {\bibinfo {volume} {6}},\ \bibinfo {pages} {e20728} (\bibinfo {year}
		{2011})}\BibitemShut {NoStop}%
	\bibitem [{\citenamefont {Biondi}\ and\ \citenamefont
		{Righi}(2019)}]{biondi2019inequality}%
	\BibitemOpen
	\bibfield  {author} {\bibinfo {author} {\bibfnamefont {Y.}~\bibnamefont
			{Biondi}}\ and\ \bibinfo {author} {\bibfnamefont {S.}~\bibnamefont {Righi}},\
	}\href@noop {} {\bibfield  {journal} {\bibinfo  {journal} {Journal of
				Economic Interaction and Coordination}\ }\textbf {\bibinfo {volume} {14}},\
		\bibinfo {pages} {93} (\bibinfo {year} {2019})}\BibitemShut {NoStop}%
	\bibitem [{\citenamefont {Biondi}\ and\ \citenamefont
		{Olla}(2020)}]{biondi2020financial}%
	\BibitemOpen
	\bibfield  {author} {\bibinfo {author} {\bibfnamefont {Y.}~\bibnamefont
			{Biondi}}\ and\ \bibinfo {author} {\bibfnamefont {S.}~\bibnamefont {Olla}},\
	}\href@noop {} {\bibfield  {journal} {\bibinfo  {journal} {Journal of
				Economic Interaction and Coordination}\ ,\ \bibinfo {pages} {1}} (\bibinfo
		{year} {2020})}\BibitemShut {NoStop}%
	\bibitem [{\citenamefont {Klass}\ \emph {et~al.}(2007)\citenamefont {Klass},
		\citenamefont {Biham}, \citenamefont {Levy}, \citenamefont {Malcai},\ and\
		\citenamefont {Solomon}}]{klass2007forbes}%
	\BibitemOpen
	\bibfield  {author} {\bibinfo {author} {\bibfnamefont {O.~S.}\ \bibnamefont
			{Klass}}, \bibinfo {author} {\bibfnamefont {O.}~\bibnamefont {Biham}},
		\bibinfo {author} {\bibfnamefont {M.}~\bibnamefont {Levy}}, \bibinfo {author}
		{\bibfnamefont {O.}~\bibnamefont {Malcai}}, \ and\ \bibinfo {author}
		{\bibfnamefont {S.}~\bibnamefont {Solomon}},\ }\href@noop {} {\bibfield
		{journal} {\bibinfo  {journal} {The European Physical Journal B}\ }\textbf
		{\bibinfo {volume} {55}},\ \bibinfo {pages} {143} (\bibinfo {year}
		{2007})}\BibitemShut {NoStop}%
	\bibitem [{\citenamefont {Levy}\ and\ \citenamefont
		{Levy}(2003)}]{levy2003investment}%
	\BibitemOpen
	\bibfield  {author} {\bibinfo {author} {\bibfnamefont {M.}~\bibnamefont
			{Levy}}\ and\ \bibinfo {author} {\bibfnamefont {H.}~\bibnamefont {Levy}},\
	}\href@noop {} {\bibfield  {journal} {\bibinfo  {journal} {Review of
				Economics and Statistics}\ }\textbf {\bibinfo {volume} {85}},\ \bibinfo
		{pages} {709} (\bibinfo {year} {2003})}\BibitemShut {NoStop}%
\end{thebibliography}
\end{document}